\newcommand{\mathsym}[1]{{}}
\let\oldbfseries=\bfseries
\let\oldmdseries=\mdseries
\let\oldnormalfont=\normalfont
\renewcommand{\bfseries}{\oldbfseries\boldmath}
\renewcommand{\mdseries}{\oldmdseries\unboldmath}
\renewcommand{\normalfont}{\oldnormalfont\unboldmath}
\numberwithin{equation}{section}
\newcommand\hypersetup[1]{}\fi
\DeclareMathSymbol{\Gamma}{\mathalpha}{letters}{"00}
\DeclareMathSymbol{\Delta}{\mathalpha}{letters}{"01}
\DeclareMathSymbol{\Theta}{\mathalpha}{letters}{"02}
\DeclareMathSymbol{\Lambda}{\mathalpha}{letters}{"03}
\DeclareMathSymbol{\Xi}{\mathalpha}{letters}{"04}
\DeclareMathSymbol{\Pi}{\mathalpha}{letters}{"05}
\DeclareMathSymbol{\Sigma}{\mathalpha}{letters}{"06}
\DeclareMathSymbol{\Upsilon}{\mathalpha}{letters}{"07}
\DeclareMathSymbol{\Phi}{\mathalpha}{letters}{"08}
\DeclareMathSymbol{\Psi}{\mathalpha}{letters}{"09}
\DeclareMathSymbol{\Omega}{\mathalpha}{letters}{"0A}
\DeclareMathOperator\arctanh{arctanh}
\newcommand{\gen}[1]{\mathrm{#1}}
\newcommand{\ii}{\mathrm{i}}
\newcommand*\widebar[1]{%
  \hbox{%
    \vbox{%
      \hrule height 0.5pt 
      \kern0.25ex
      \hbox{%
        \kern-0.3em
        \ensuremath{#1}%
        \kern-0.1em
      }%
    }%
  }%
}
\newcommand{\ket}[1]{\left|#1\right\rangle}      
\newcommand{\bra}[1]{\left\langle #1\right|}     
\newcommand{\prodop}[1]{\mathop{\overrightarrow\prod}\limits_{#1}}
\newcommand{\alg}[1]{\mathfrak{#1}}
\newcommand{\beq}{\begin{equation}}
\newcommand{\eeq}{\end{equation}}
\def\[{\begin{equation}}
\def\]{\end{equation}}
\def\<{\begin{eqnarray}}
\def\>{\end{eqnarray}}
\newtheorem{mydef}{Definition}
\newtheorem{theorem}{Theorem}
\newtheorem{lemma}{Lemma} 
\newtheorem{remark}{Remark}
\newtheorem{proposition}{Proposition}
\def\mr@ignsp#1 {\ifx\:#1\@empty\else #1\expandafter\mr@ignsp\fi}%
\newcommand{\multiref}[1]{\begingroup
\xdef\mr@no@sparg{\expandafter\mr@ignsp#1 \: }%
\def\mr@comma{}%
\@for\mr@refs:=\mr@no@sparg\do{\mr@comma\def\mr@comma{,}\ref{\mr@refs}}%
\endgroup}
\newcommand{\hypref}[2]{\ifx\href\asklfhas #2\else\href{#1}{#2}\fi}
\newcommand{\Secref}[1]{Section~\multiref{#1}}
\newcommand{\Appref}[1]{Appendix~\multiref{#1}}
\renewcommand{\eqref}[1]{(\multiref{#1})}
\newlength{\apb@width}
\newcommand{\autoparbox}[2][c]{\settowidth{\apb@width}{#2}\parbox[#1]{\apb@width}{#2}}
\asklfhas\newcommand{\href}[2]{#2}\fi
\begin{document}

\renewcommand{\thefootnote}{\fnsymbol{footnote}}
\thispagestyle{empty}
\begin{flushright}\footnotesize
ITP-UU-14/07 \\
SPIN-14/08
\end{flushright}
\vspace{1cm}

\begin{center}%
{\Large\bfseries%
\hypersetup{pdftitle={Partial differential equations from integrable vertex models}}%
Partial differential equations \\ from integrable vertex models%
\par} \vspace{2cm}%

\textsc{W. Galleas}\vspace{5mm}%
\hypersetup{pdfauthor={Wellington Galleas}}%

\textit{II. Institut f\"ur Theoretische Physik \\ Universit\"at Hamburg, Luruper Chaussee 149 \\ 22761 Hamburg, Germany}\vspace{3mm}%

\verb+wellington.galleas@desy.de+ %

\par\vspace{3cm}

\textbf{Abstract}\vspace{7mm}

\begin{minipage}{12.7cm}
In this work we propose a mechanism for converting the spectral problem of vertex models transfer matrices
into the solution of certain linear partial differential equations. This mechanism is illustrated for the 
$U_q[\widehat{\alg{sl}}(2)]$ invariant six-vertex model and the resulting partial differential equation
is studied for particular values of the lattice length.

\hypersetup{pdfkeywords={Yang-Baxter algebra, vertex models, differential equations}}%
\hypersetup{pdfsubject={}}%
\end{minipage}
\vskip 2cm
{\small PACS numbers:  05.50+q, 02.30.IK}
\vskip 0.1cm
{\small Keywords: Partial differential equations, Yang-Baxter algebra, vertex models}
\vskip 2cm
{\small January 2015}

\end{center}

\newpage
\renewcommand{\thefootnote}{\arabic{footnote}}
\setcounter{footnote}{0}

\tableofcontents

\section{Introduction}
\label{sec:intro}

Vertex models of Statistical Mechanics are prominent examples where the computation of
the model partition function can be described as an eigenvalue problem. This possibility
dates back to Kramers and Wannier transfer matrix technique \cite{Kramers_1941a, Kramers_1941b}
originally devised for the Ising model. Within that approach the partition function of the model
is expressed in terms of the eigenvalues of a matrix usually refereed to as {\it transfer matrix}.
This technique has been successfully applied to a large variety of models although one has no guarantee 
a priori that the diagonalisation of the transfer matrix can be achieved. 

An important class of non-trivial models whose transfer matrix has been exactly diagonalized 
is formed by those possessing a parameter $\lambda$ such that its transfer matrix 
$T(\lambda)$ satisfies the commutativity condition $[T(\lambda_1), T(\lambda_2)]=0$ for general
values of $\lambda_1$ and $\lambda_2$ \cite{Baxter_book}. This latter property paves the way for
a variety of non-perturbative methods such as the coordinate Bethe ansatz \cite{Lieb_1967},
the algebraic Bethe ansatz \cite{Fad_1979}, $T\;$-$\; Q$ relations \cite{Baxter_1972}, 
analytical Bethe ansatz \cite{Reshet_1987} and separation of variables \cite{Sklyanin_1985a, Sklyanin_1985b}
among others.

Within the approach of the coordinate Bethe ansatz for instance, the transfer matrices 
diagonalisation process involves the explicit computation of the action of the transfer matrix on a finite
dimensional vector in terms of its components. However, suppose one would like to study the spectral
problem for an operator constituted by generators of the  $\alg{sl}(2)$ algebra. In that case we
could also consider a differential representation of the $\alg{sl}(2)$ algebra \cite{Humphreys_book} 
and study the same eigenvalue problem through the corresponding differential equation. 
It is worth remarking that we use this same methodology in Quantum Mechanics when we convert the spectral
problem for a Hamiltonian in the Heisenberg formulation into the solution of a stationary Schr\"odinger equation.

In this work we devise an analogous approach for the transfer matrix of the $U_q[\widehat{\alg{sl}}(2)]$
six-vertex model. More precisely, we obtain a partial differential equation describing the spectral problem
of the aforementioned operator. The main ingredient of our derivation is the Yang-Baxter algebra
employed within the lines of the algebraic-functional approach introduced in \cite{Galleas_2008} and
subsequently refined in the series of papers \cite{Galleas_2010, Galleas_2012, Galleas_SCP, Galleas_proc}.
It is also worth remarking that a connection between the spectral problem of transfer matrices and differential
equations had appeared previously in the literature under the name ODE/IM correspondence. See for instance the 
review \cite{Dorey_2007} and references therein. However, it is not clear if there is any relation between the ODE/IM correspondence 
and the approach considered here. The main reason for that lies in the fact that the ODE/IM correspondence describes
a relation between ordinary differential equations and integrable models while here we shall obtain partial differential equations.  
Moreover, our results are valid for finite lattices while the ODE/IM correspondence emerges in the continuum limit.

This paper is organized as follows. In \Secref{sec:eigen} we describe the $U_q[\widehat{\alg{sl}}(2)]$
transfer matrix and its spectral problem. \Secref{sec:AF} is devoted to the analysis of the aforementioned
spectral problem in terms of a functional equation derived as a direct consequence of the Yang-Baxter algebra.
This functional equation is converted into a partial differential equation in \Secref{sec:PDE} and its analysis
is performed for particular values of the lattice length. Concluding remarks are presented in \Secref{sec:conclusion}
and extra results for the case of domain wall boundaries are given in \Appref{sec:dwbc}.

\section{The transfer matrix spectral problem}
\label{sec:eigen}

In this section we shall briefly recall some standard definitions and introduce a convenient notation
to describe the eigenvalue problem for the transfer matrix associated with the $U_q[\widehat{\alg{sl}}(2)]$
solution of the Yang-Baxter equation. Although this case corresponds to the well known trigonometric six-vertex model,
here we shall consider it from the perspective described in \cite{Galleas_proc}.

\paragraph{Monodromy and transfer matrices.} Let $\mathcal{T} \in \mbox{End}(\mathbb{V}_{\mathcal{A}} \otimes \mathbb{V}_{\mathcal{Q}})$
be an operator which we shall refer to as monodromy matrix. Here we shall consider the $U_q[\widehat{\alg{sl}}(2)]$ vertex model
and in that case we have $\mathbb{V}_{\mathcal{A}} \cong \mathbb{C}^2$ and $\mathbb{V}_{\mathcal{Q}} \cong (\mathbb{C}^2)^{\otimes L}$
for $L \in \mathbb{N}$. Hence the monodromy matrix $\mathcal{T}$ can be recasted as 
\[
\label{abcd}
\mathcal{T} = \left( \begin{matrix}
A & B \\
C & D \end{matrix} \right) \;
\]
with entries $A, B, C, D \in \mathbb{V}_{\mathcal{Q}}$. We then define the transfer matrix $T$ as the following
operator,
\[
\label{trans}
T = A + D \; .
\]

\paragraph{Yang-Baxter algebra.} Integrable vertex models in the sense of Baxter are characterized by a monodromy matrix $\mathcal{T}$
satisfying the following algebraic relation,
\[
\label{ybalg}
\mathcal{R}(x-y) \left[ \mathcal{T}(x) \otimes \mathcal{T}(y) \right] = \left[ \mathcal{T}(y) \otimes \mathcal{T}(x) \right] \mathcal{R}(x-y) \; .
\]
In (\ref{ybalg}) we have spectral parameters $x, y \in \mathbb{C}$ and $\mathcal{R} \in \mbox{End}( \mathbb{C}^2 \otimes \mathbb{C}^2 )$
in the case of the $U_q[\widehat{\alg{sl}}(2)]$ vertex model. The algebraic relation (\ref{ybalg}) is associative for 
$\mathcal{R}$-matrices satisfying the Yang-Baxter equation, namely
\[
\label{ybe}
\left[ \mathcal{R}(x) \otimes \mathbbm{1} \right] \left[ \mathbbm{1} \otimes \mathcal{R}(x+y) \right] \left[ \mathcal{R}(y) \otimes \mathbbm{1} \right] = \left[ \mathbbm{1} \otimes \mathcal{R}(y) \right]   \left[ \mathcal{R}(x+y) \otimes \mathbbm{1} \right]  \left[ \mathbbm{1} \otimes \mathcal{R}(x) \right] \; ,
\]
with symbol $\mathbbm{1}$ denoting the $2 \times 2$ identity matrix. The matrix $\mathcal{R}$ plays the role of structure constant for the
Yang-Baxter algebra (\ref{ybalg}) and the solution of (\ref{ybe}) invariant under the $U_q[\widehat{\alg{sl}}(2)]$ algebra explicitly
reads
\[
\label{rmat}
\mathcal{R}(x) = \left( \begin{matrix}
a(x) & 0 & 0 & 0 \\
0 & c(x) & b(x) & 0 \\
0 & b(x) & c(x) & 0 \\
0 & 0 & 0 & a(x) \end{matrix} \right) \; .
\]
The non-null entries of (\ref{rmat}) are given by functions $a(x) = \sinh{(x + \gamma)}$, $b(x) = \sinh{(x)}$
and $c(x) = \sinh{(\gamma)}$. 

\begin{remark}
Since the $\mathcal{R}$-matrix (\ref{rmat}) is invertible, the relation (\ref{ybalg}) implies that 
the transfer matrix (\ref{trans}) forms a commutative family, i.e. $[T(x) , T(y)] = 0$.
\end{remark}
 
Now let $\mathcal{M}$ be the set $\mathcal{M}(x) = \{ A, B , C, D\}(x)$ parameterized by a continuous complex variable $x$. 
The elements of $\mathcal{M}$ obey the Yang-Baxter algebra and among the commutation rules encoded in 
(\ref{ybalg}) we shall make use of the following ones:
\<
\label{ABDB}
A(x_1) B(x_2) &=& \frac{a(x_2 - x_1)}{b(x_2 - x_1)} B(x_2) A(x_1) - \frac{c(x_2 - x_1)}{b(x_2 - x_1)} B(x_1) A(x_2) \nonumber \\
D(x_1) B(x_2) &=& \frac{a(x_1 - x_2)}{b(x_1 - x_2)} B(x_2) D(x_1) - \frac{c(x_1 - x_2)}{b(x_1 - x_2)} B(x_1) D(x_2) \nonumber \\
B(x_1) B(x_2) &=& B(x_2) B(x_1) \; .
\> 
\begin{remark}
The elements of $\mathcal{M}$ are subjected to the Yang-Baxter algebra (\ref{ybalg}) which is in general 
non-abelian. In this way the $2$-tuple $(\xi_1 , \xi_2)  : \; \xi_{i} \in \mathcal{M}(\lambda_i)$
originated from the Cartesian product $\mathcal{M}(\lambda_1) \times \mathcal{M}(\lambda_2)$ will be 
simply understood as the non-commutative product $\xi_1 \xi_2$. This convention is naturally extended
for the $n$-tuples $(\xi_1 , \dots , \xi_n)$ generated by the products $\mathcal{M}(\lambda_1) \times \dots \times \mathcal{M}(\lambda_n)$.
\end{remark}

\paragraph{Monodromy matrix representation.} The ordered product 
\[
\label{mono}
\mathcal{T}(\lambda) = \prodop{1 \leq j \leq L} \gen{P}_{\mathcal{A} j} \mathcal{R}_{\mathcal{A} j}(\lambda - \mu_j)
\]
with $\mathcal{R}$-matrix given by (\ref{rmat}) is a representation of (\ref{ybalg}) due to the Yang-Baxter relation (\ref{ybe}). 
In (\ref{mono}) $\gen{P}$ denotes the standard permutation matrix $\gen{P}_{l m}  : \mathbb{V}_l \otimes \mathbb{V}_m \mapsto \mathbb{V}_m \otimes \mathbb{V}_l$
for $\mathbb{V}_{l,m} \cong \mathbb{C}^2$, while $\lambda, \mu_j \in \mathbb{C}$ are respectively the spectral 
and inhomogeneity parameters. In its turn the subscripts in $\mathcal{R}_{\mathcal{A} j}$ indicate that we have a $\mathcal{R}$-matrix
acting on the tensor product space $\mathbb{V}_{\mathcal{A}} \otimes \mathbb{V}_{j}$. More precisely, we have
$\mathcal{R}_{\mathcal{A} j} \in \mbox{End}(\mathbb{V}_{\mathcal{A}} \otimes \mathbb{V}_{j})$.

\paragraph{Highest weight vectors.} The vector $\ket{0} = \left( \begin{matrix} 1 \\ 0  \end{matrix}  \right)^{\otimes L}$
is a $\alg{sl}(2)$ highest weight vector and the action of $\mathcal{M}$ built from the representation (\ref{mono}) can be
straightforwardly computed due to the structure of (\ref{rmat}). They are given by the following expressions:
\begin{align}
\label{action}
A(\lambda) \ket{0} &= \prod_{j=1}^L a(\lambda - \mu_j) \ket{0} & B(\lambda) \ket{0} &\neq 0 \nonumber \\
C(\lambda) \ket{0} &= 0 &  D(\lambda) \ket{0} &= \prod_{j=1}^L b(\lambda - \mu_j) \ket{0} \;\;\; .
\end{align}

\paragraph{Yang-Baxter relations of higher degrees.} The spectrum of the transfer matrix $T$ can be encoded 
in a set of functional relations along the lines described in \cite{Galleas_2008, Galleas_proc, Galleas_2013}. 
Those functional relations are derived as a direct consequence of the Yang-Baxter algebra.
In order to simplify our notation we introduce the symbol $[ \lambda_1 , \dots , \lambda_n ]$ defined as the
following product of operators,
\[
\label{prodB}
[ \lambda_1 , \dots , \lambda_n ] = \prodop{1 \leq j \leq n} B(\lambda_j) \; .
\]

\begin{remark} 
Due to the last relation of (\ref{ABDB}) we have that $[\lambda_1 , \dots , \lambda_n]$ is symmetric under
the permutation of variables, i.e. $[\dots, \lambda_i , \dots , \lambda_j, \dots] = [\dots, \lambda_j , \dots , \lambda_i, \dots]$.
This property motivates the set theoretic notation $X^{a,b} = \{ \lambda_j \; : \; a \leq j \leq b  \}$ 
and we can write $[ \lambda_1 , \dots , \lambda_n ]  = [ X^{1,n} ]$. We shall also employ the notation $X^{a,b}_{\lambda} = X^{a,b} \backslash \{ \lambda \}$.
\end{remark}

Now the products $A(\lambda_0) [ X^{1,n} ]$ and $D(\lambda_0) [ X^{1,n} ]$ can be investigated under the light of (\ref{ABDB}) taking
into account the previous definitions. By doing so we are left with the following Yang-Baxter relations of degree $n+1$,
\<
\label{off}
A(\lambda_0) [ X^{1,n} ] &=& \prod_{\lambda \in X^{1,n}} \frac{a(\lambda - \lambda_0)}{b(\lambda - \lambda_0)} [X^{1,n}] A(\lambda_0) \nonumber \\
&& - \; \sum_{\lambda \in X^{1,n}} \frac{c(\lambda - \lambda_0)}{b(\lambda - \lambda_0)} \prod_{\tilde{\lambda} \in X^{1,n}_{\lambda}} \frac{a(\tilde{\lambda} - \lambda)}{b(\tilde{\lambda} - \lambda)} [X^{0,n}_{\lambda}] A(\lambda) \nonumber \\
D(\lambda_0) [ X^{1,n} ] &=& \prod_{\lambda \in X^{1,n}} \frac{a(\lambda_0 - \lambda)}{b(\lambda_0 - \lambda)} [X^{1,n}] D(\lambda_0) \nonumber \\
&& - \; \sum_{\lambda \in X^{1,n}} \frac{c(\lambda_0 - \lambda)}{b(\lambda_0 - \lambda)} \prod_{\tilde{\lambda} \in X^{1,n}_{\lambda}} \frac{a(\lambda - \tilde{\lambda})}{b(\lambda - \tilde{\lambda})} [X^{0,n}_{\lambda}] D(\lambda) \; .
\>
Here we intend to explore the relations (\ref{off}) in order to describe the spectrum of the transfer matrix
(\ref{trans}). With that goal in mind we add up both expressions in (\ref{off}) to obtain the identity
\<
\label{Toff}
T(\lambda_0) [ X^{1,n} ] &=&  [X^{1,n}] ( M^{A}_0 A(\lambda_0) + M^{D}_0 D(\lambda_0) ) \nonumber \\
&& - \; \sum_{\lambda \in X^{1,n}}  [X^{0,n}_{\lambda}] ( M^{A}_{\lambda} A(\lambda) + M^{D}_{\lambda} D(\lambda) ) \; .
\>
The coefficients in (\ref{Toff}) explicitly read
\begin{align}
\label{coeff}
M^{A}_0 &= \prod_{\lambda \in X^{1,n}} \frac{a(\lambda - \lambda_0)}{b(\lambda - \lambda_0)} & M^{D}_0 &= \prod_{\lambda \in X^{1,n}} \frac{a(\lambda_0 - \lambda)}{b(\lambda_0 - \lambda)} \nonumber \\
M^{A}_{\lambda} &= \frac{c(\lambda - \lambda_0)}{b(\lambda - \lambda_0)} \prod_{\tilde{\lambda} \in X^{1,n}_{\lambda}} \frac{a(\tilde{\lambda} - \lambda)}{b(\tilde{\lambda} - \lambda)} & M^{D}_{\lambda} &= \frac{c(\lambda_0 - \lambda)}{b(\lambda_0 - \lambda)} \prod_{\tilde{\lambda} \in X^{1,n}_{\lambda}} \frac{a(\lambda - \tilde{\lambda})}{b(\lambda - \tilde{\lambda})} \; .
\end{align}

Within the framework of the algebraic Bethe ansatz \cite{Fad_1979} one would then consider the action of 
(\ref{Toff}) on the highest weight vector $\ket{0}$ and try to fix the set of parameters 
$X^{1,n}$ in such a way that the transfer matrix eigenvalues can be directly read from the resulting
expression. Here we follow a different strategy and we shall demonstrate how the Yang-Baxter relation 
(\ref{Toff}) can be converted into a functional equation.

\section{Algebraic-functional approach}
\label{sec:AF}

In this section we aim to show that the eigenvalue problem for the transfer matrix (\ref{trans})
can be described in terms of certain functional equations originated from the Yang-Baxter
algebra. This statement is made precise in Theorem \ref{funeq} and its proof will require the 
following definitions.

\begin{mydef}[Algebras into functions] Considering the mechanism described in \cite{Galleas_proc} we 
define the following continuous and additive map $\pi$,
\[
\label{pi}
\pi_{n+1} \; : \quad \mathcal{M}(\lambda_0) \times \mathcal{M}(\lambda_1) \times \dots \times \mathcal{M}(\lambda_n) \mapsto \mathbb{C} [\lambda_0^{\pm 1} , \lambda_1^{\pm 1} , \dots , \lambda_n^{\pm 1} ]  \; .
\]
The map $\pi_{n+1}$ essentially associates a complex function to the elements of $\mathcal{M}(\lambda_0) \times \mathcal{M}(\lambda_1) \times \dots \times \mathcal{M}(\lambda_n)$.
\end{mydef}
The proof of the announced Theorem \ref{funeq} will require the application of the map $\pi_{n+1}$ over the higher order Yang-Baxter relation 
(\ref{Toff}). We shall also need to build a suitable realization of (\ref{pi}) and here we consider the recipe given in \cite{Galleas_2013}.

\begin{lemma} \label{pion}
Let $\ket{ \gen{\Lambda} } \in \mathrm{span}(\mathbb{V}_{\mathcal{Q}})$ be an eigenvector of the transfer matrix (\ref{trans})
and let $\bra{ \gen{\Lambda} } $ denote its dual. Also consider the $\alg{sl}(2)$ highest weight vector $\ket{0}$ as previously
defined and $\mathcal{W}_{n+1} = \mathcal{M}(\lambda_0) \times \mathcal{M}(\lambda_1) \times \dots \times \mathcal{M}(\lambda_n)$.
Hence we have that
\[
\label{pirep}
\pi_{n+1} (\mathcal{A}) = \bra{ \gen{\Lambda} } \mathcal{A} \ket{0} \qquad \qquad \forall \; \mathcal{A} \in \mathcal{W}_{n+1} 
\]
is a realization of (\ref{pi}).
\end{lemma}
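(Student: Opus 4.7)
The goal is to verify that the proposed assignment $\mathcal{A} \mapsto \bra{\gen{\Lambda}}\mathcal{A}\ket{0}$ satisfies the three properties implicit in Definition~\ref{pi}: the right codomain (Laurent polynomials in the spectral parameters), additivity, and continuity. Since $\ket{\gen{\Lambda}}$ and $\ket{0}$ both lie in the finite-dimensional space $\mathbb{V}_{\mathcal{Q}}$, the matrix element is well-defined for every $\mathcal{A} \in \mathcal{W}_{n+1}$, and there is nothing subtle about convergence; the content of the lemma is really that this matrix element does end up in the prescribed function ring.

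For the codomain, I would argue as follows. Each factor $\xi_i \in \mathcal{M}(\lambda_i)$ is one of $A,B,C,D$ at spectral parameter $\lambda_i$, and via the ordered product representation~\eqref{mono} and the explicit $\mathcal{R}$-matrix~\eqref{rmat}, the matrix entries of $\xi_i$ on $\mathbb{V}_{\mathcal{Q}}$ are polynomials in the functions $\sinh(\lambda_i - \mu_j)$ and $\cosh(\lambda_i - \mu_j)$. Expressing everything multiplicatively through $e^{\lambda_i}$, these entries are Laurent polynomials in $\lambda_i$ in the sense of~\eqref{pi}. Since the non-commutative product $\mathcal{A} = \xi_0\xi_1\cdots\xi_n$ acts as ordinary matrix multiplication on $\mathbb{V}_{\mathcal{Q}}$, its entries belong to $\mathbb{C}[\lambda_0^{\pm 1},\ldots,\lambda_n^{\pm 1}]$, and so does the scalar $\bra{\gen{\Lambda}}\mathcal{A}\ket{0}$.

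Additivity is immediate from bilinearity of the dual pairing: if $\mathcal{A},\mathcal{B} \in \mathcal{W}_{n+1}$ share all but one factor and are added in that slot, the corresponding matrix elements add, and one extends this componentwise across the Cartesian product structure of $\mathcal{W}_{n+1}$. Continuity follows from the observation that each entry of $\mathcal{R}(x)$ (hence of $\xi_i$) is analytic in $\lambda_i$, matrix multiplication is continuous, and the projection onto $\bra{\gen{\Lambda}}\cdot\ket{0}$ is continuous; hence the composition $\pi_{n+1}$ is continuous on $\mathcal{W}_{n+1}$ with the natural topology.

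The only point requiring care is to respect the non-commutative product convention stated in the Remark preceding the lemma: one must read the $n$-tuple $(\xi_0,\ldots,\xi_n)$ as the ordered operator product $\xi_0\xi_1\cdots\xi_n$ acting on $\ket{0}$ from the right before pairing with $\bra{\gen{\Lambda}}$. With that convention fixed, no further choice is needed and the three required properties follow directly from the structure of the Yang-Baxter algebra representation~\eqref{mono} and of the highest weight vector~\eqref{action}. I do not anticipate a genuine obstacle; the lemma is a bookkeeping statement setting up the map $\pi_{n+1}$ for subsequent use in the proof of Theorem~\ref{funeq}.
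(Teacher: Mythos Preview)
Your argument is correct and in fact more detailed than the paper's. You carefully check the three defining properties of $\pi_{n+1}$ (codomain, additivity, continuity), tracing the Laurent polynomial structure back to the explicit form of the $\mathcal{R}$-matrix entries. The paper's proof, by contrast, is two sentences and isolates a single point: both $\ket{0}$ and $\ket{\gen{\Lambda}}$ are independent of the variables $\lambda_j$, the latter because the transfer matrices form a commutative family and hence share a common ($\lambda$-independent) eigenbasis. Once that is granted, all the dependence on $\lambda_0,\dots,\lambda_n$ sits in the operator $\mathcal{A}$ itself, and the rest is taken as evident.

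You implicitly use this fact when you treat $\ket{\gen{\Lambda}}$ as a fixed vector in $\mathbb{V}_{\mathcal{Q}}$, but you never say why the eigenvector does not vary with the spectral parameter. That is the one substantive observation the paper singles out, and it is worth stating explicitly: a priori an eigenvector of $T(\lambda)$ could depend on $\lambda$, and it is the commutativity $[T(x),T(y)]=0$ that rules this out. Your version buys a cleaner verification of the formal map properties; the paper's version buys emphasis on the conceptual reason the construction works at all.
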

\begin{proof}
The proof is straightforward and follows from the fact that both $\ket{0}$ and $\ket{ \gen{\Lambda} }$ do not depend on
the variables $\lambda_j$. The independence of $\ket{0}$ with $\lambda_j$ is clear from its definition while
this same property for $\ket{ \gen{\Lambda} }$ is due to the fact that the transfer matrix (\ref{trans}) forms
a commutative family.
\end{proof}

\begin{theorem}[Functional equation] \label{funeq}
Let $\Lambda$ be an eigenvalue of the transfer matrix $T$ associated with the eigenvector 
$\ket{ \gen{\Lambda} } \in \mathrm{span}(\mathbb{V}_{\mathcal{Q}})$, i.e. $T(\lambda) \ket{\gen{\Lambda}} = \Lambda(\lambda) \ket{\gen{\Lambda}}$.  
Then $\exists \; \mathcal{F}_n \; : \; \mathbb{C}^n \mapsto \mathbb{C}$ characterizing the eigenvalue $\Lambda$ through the equation
\[
\label{FZ}
J_0 \mathcal{F}_n (X^{1,n}) - \sum_{\lambda \in X^{1,n}} K_{\lambda} \mathcal{F}_n (X^{0,n}_{\lambda}) = \Lambda(\lambda_0) \mathcal{F}_n (X^{1,n}) 
\]
with coefficients
\<
\label{coeffJK}
J_0 &=& \prod_{j=1}^L a(\lambda_0 - \mu_j) \; M^{A}_0 + \prod_{j=1}^L b(\lambda_0 - \mu_j) \; M^{D}_0 \nonumber \\
K_{\lambda} &=& \prod_{j=1}^L a(\lambda - \mu_j) \; M^{A}_{\lambda} + \prod_{j=1}^L b(\lambda - \mu_j) \; M^{D}_{\lambda} \; .
\>
\end{theorem}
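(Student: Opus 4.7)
The plan is to take the operator identity (\ref{Toff}) and project it onto scalars by sandwiching between a left transfer-matrix eigenvector $\bra{\gen{\Lambda}}$ and the highest-weight vector $\ket{0}$, i.e.\ by applying the realization of the map $\pi_{n+1}$ furnished by Lemma~\ref{pion}. First I would define the function
\[
\mathcal{F}_n(X^{1,n}) \; := \; \bra{\gen{\Lambda}} \, [X^{1,n}] \, \ket{0} \, ,
\]
which is a well-defined function on $\mathbb{C}^n$: it is symmetric in its arguments because of the third relation in (\ref{ABDB}), and it depends only on the entries of the set $X^{1,n}$ (and on the parameters of the model), not on the auxiliary variable $\lambda_0$, because $\bra{\gen{\Lambda}}$ is independent of $\lambda_0$ by commutativity of the transfer matrix family (Remark~1).

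The left-hand side of (\ref{Toff}), when sandwiched between $\bra{\gen{\Lambda}}$ and $\ket{0}$, collapses via the eigenvalue equation $\bra{\gen{\Lambda}} T(\lambda_0) = \Lambda(\lambda_0) \bra{\gen{\Lambda}}$ into $\Lambda(\lambda_0) \mathcal{F}_n(X^{1,n})$, which is exactly the right-hand side of (\ref{FZ}). On the right-hand side of (\ref{Toff}) the operators $A$ and $D$ already sit to the right of every $B$-operator, so they act directly on $\ket{0}$ via the formulas (\ref{action}). In the first term this converts $M^{A}_0 A(\lambda_0) + M^{D}_0 D(\lambda_0)$ acting on $\ket{0}$ into the scalar
\[
M^{A}_0 \prod_{j=1}^{L} a(\lambda_0-\mu_j) + M^{D}_0 \prod_{j=1}^{L} b(\lambda_0-\mu_j) \, ,
\]
which is precisely the coefficient $J_0$ of (\ref{coeffJK}). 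Analogously, for each $\lambda \in X^{1,n}$ the combination $M^{A}_{\lambda} A(\lambda) + M^{D}_{\lambda} D(\lambda)$ applied to $\ket{0}$ reduces to the scalar $K_{\lambda}$. Since $M^{A,D}_{0}$ and $M^{A,D}_{\lambda}$ are themselves $\mathbb{C}$-valued and commute with everything, they can be freely pulled out of the bra-ket.

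What remains after this projection are the factors $\bra{\gen{\Lambda}}[X^{1,n}]\ket{0}$ in the first term and $\bra{\gen{\Lambda}}[X^{0,n}_{\lambda}]\ket{0}$ in the sum; by definition these are $\mathcal{F}_n(X^{1,n})$ and $\mathcal{F}_n(X^{0,n}_{\lambda})$ respectively, the latter being well-defined because $|X^{0,n}_{\lambda}| = n$. Assembling these pieces reproduces (\ref{FZ}) identically.

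The proof is essentially a bookkeeping exercise, so there is no serious obstacle; if anything, the main conceptual point is recognizing that Lemma~\ref{pion} is the correct realization to use, i.e.\ that choosing $\mathcal{F}_n$ as the matrix element of a string of $B$-operators is what lets one side of (\ref{Toff}) become the eigenvalue times $\mathcal{F}_n$ and the other side become an explicit scalar functional combination. One should also verify that the term $C(\lambda)\ket{0}=0$ from (\ref{action}) plays no role here, since the relation (\ref{Toff}) involves only $A$, $B$, $D$, and that no reordering of $A$ or $D$ past the $B$'s is required because (\ref{Toff}) is already in the normal-ordered form demanded by this argument.
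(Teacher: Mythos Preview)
Your proposal is correct and follows essentially the same approach as the paper: apply the realization $\pi_{n+1}(\cdot)=\bra{\gen{\Lambda}}\cdot\ket{0}$ from Lemma~\ref{pion} to both sides of the higher-degree relation (\ref{Toff}), use the left eigenvalue property of $\bra{\gen{\Lambda}}$ on the LHS and the highest-weight actions (\ref{action}) on the RHS, and define $\mathcal{F}_n(X^{1,n})=\bra{\gen{\Lambda}}[X^{1,n}]\ket{0}$. The paper's proof is organized around the reduction properties (\ref{lhs}) and (\ref{rhs}) of the map $\pi$, but these are exactly the two observations you spell out.
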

\begin{proof}
The realization (\ref{pirep}) exhibits useful properties which will aid us in extracting information
about the transfer matrix spectrum from the higher degree relation (\ref{Toff}). For instance, the
application of $\pi_{n+1}$ to the LHS of (\ref{Toff}) will produce the term
$\pi_{n+1} (T(\lambda_0) [ X^{1,n} ])$ which simplifies to
\[
\label{lhs}
\pi_{n+1} (T(\lambda_0) [ X^{1,n} ]) = \Lambda(\lambda_0) \pi_{n} ([ X^{1,n} ]) \; .
\]
Similarly, we find that the application of the map $\pi_{n+1}$ over the RHS of (\ref{Toff}) only yields terms of the form $\pi_{n+1} ([ Z^{1,n} ] A(x) )$ and
$\pi_{n+1} ([ Z^{1,n} ] D(x) )$ for generic variables $x$ and $Z^{1,n} = \{ z_j \in \mathbb{C} \; : \; 1 \leq j \leq n  \}$.
Due to (\ref{action}) and (\ref{pirep}) those terms exhibit the following reduction properties,
\<
\label{rhs}
\pi_{n+1} ([ Z^{1,n} ] A(x) ) &=& \prod_{j=1}^{L} a(x-\mu_j) \pi_{n} ([ Z^{1,n} ] ) \nonumber \\
\pi_{n+1} ([ Z^{1,n} ] D(x) ) &=& \prod_{j=1}^{L} b(x-\mu_j) \pi_{n} ([ Z^{1,n} ] ) \; .
\>
In their turn the relations (\ref{lhs}) and (\ref{rhs}) tell us that the map $\pi$ given by (\ref{pirep})
obeys recurrence relations of type $\pi_{n+1} \mapsto \pi_n$ over the elements of (\ref{Toff}).
Next we introduce the notation $\mathcal{F}_n( X^{1,n} ) = \pi_n ( [X^{1,n}] )$ in such a way that
the application of (\ref{pirep}) on (\ref{Toff}), taking into account the properties (\ref{lhs}) and (\ref{rhs}), 
yields the functional relation
\<
J_0 \mathcal{F}_n (X^{1,n}) - \sum_{\lambda \in X^{1,n}} K_{\lambda} \mathcal{F}_n (X^{0,n}_{\lambda}) = \Lambda(\lambda_0) \mathcal{F}_n (X^{1,n}) \; ,
\>
with coefficients
\<
\label{coeffJK}
J_0 &=& \prod_{j=1}^L a(\lambda_0 - \mu_j) \; M^{A}_0 + \prod_{j=1}^L b(\lambda_0 - \mu_j) \; M^{D}_0 \nonumber \\
K_{\lambda} &=& \prod_{j=1}^L a(\lambda - \mu_j) \; M^{A}_{\lambda} + \prod_{j=1}^L b(\lambda - \mu_j) \; M^{D}_{\lambda} \; .
\>
This completes the proof of Theorem \ref{funeq}.
\end{proof}

\subsection{Operatorial description}
\label{sec:OPD}

The transfer matrix eigenvalue problem has been described in Theorem \ref{funeq} as the solution
of a functional equation. In this section we intend to show that the obtained functional equation,
namely (\ref{FZ}), can be recasted in an operatorial form which allows us to identify the action 
of the transfer matrix (\ref{trans}) on a particular function space. For that we introduce the
operator $D_{z_i}^{z_{\alpha}}$ whose properties are described as follows.

\begin{mydef}
Let $n \in \mathbb{N}$ and $\alpha \notin \{ k \in \mathbb{N} \; : \; 1 \leq k \leq n \}$.
Also, let $f$ be a complex function $f(z) \in \mathbb{C}[z]$ where $z=(z_1 , \dots , z_n) \in \mathbb{C}^n$.
Then we define the action of the operator $D_{z_i}^{z_{\alpha}}$ on $\mathbb{C}[z]$ as,
\[
\label{dia}
D_{z_i}^{z_{\alpha}} \; : \qquad \quad f(z_1, \dots , z_i , \dots, z_n) \;\; \mapsto \;\; f(z_1, \dots , z_{\alpha} , \dots, z_n) \; .
\]
The operator $D_{z_i}^{z_{\alpha}}$ has been previously introduced in \cite{Galleas_2011} and it basically replaces a given
variable $z_i$ by a variable $z_{\alpha}$. 
\end{mydef}

In terms of the operator $D_{z_i}^{z_{\alpha}}$ we can rewrite Eq. (\ref{FZ}) as 
\[
\label{eigenL}
\mathfrak{L}(\lambda_0) \mathcal{F}_n (X^{1,n}) = \Lambda(\lambda_0) \mathcal{F}_n (X^{1,n})
\]
with operator $\mathfrak{L}$ reading
\[
\label{Lop}
\mathfrak{L}(\lambda_0) = J_0 - \sum_{\lambda \in X^{1,n}} K_{\lambda} D^{\lambda_0}_{\lambda} \; .
\]
Now we can immediately recognize Eq. (\ref{eigenL}) as an eigenvalue equation and some comments are
in order at this stage. For instance, the introduction of the operator $D_{z_i}^{z_{\alpha}}$ is able to 
localize the whole dependence of the LHS of (\ref{FZ}) with the spectral parameter $\lambda_0$
in the operator $\mathfrak{L}$. In fact, we can identify $\mathfrak{L}$ with the transfer matrix
(\ref{trans}) acting on a particular function space spanned by functions $\mathcal{F}_n$. This 
function space will be described in the next section.

\subsection{The function space $\gen{\Xi}( \mathbb{C}^n )$}
\label{sec:Fn}

The functions $\mathcal{F}_n$ solving Eq. (\ref{FZ}) consist of the projection of the dual
transfer matrix eigenvector onto a particular set of vectors usually refereed to as Bethe vectors.
Although we are mainly interested in the eigenvalues $\Lambda$ we still need to 
restrict our solutions $\mathcal{F}_n$ to a class of functions preserving certain representation
theoretic properties exhibited by the elements involved in the derivation of (\ref{FZ}).
This fact motivates the definition of the function space $\gen{\Xi}( \mathbb{C}^n )$ whose properties
will be discussed in what follows.

\begin{mydef} Let the functions $\mathcal{F}_n \; : \;\; \mathbb{C}^n \mapsto \mathbb{C}$ be of the form
\[
\label{fn}
\mathcal{F}_n (X^{1,n}) = \bra{\gen{\Lambda}} [\lambda_1 , \dots , \lambda_n]  \ket{0} \; ,
\]
where each operator $B$ in the product $[\lambda_1 , \dots , \lambda_n]$ is built according
to (\ref{abcd}), (\ref{rmat}) and (\ref{mono}). In its turn $\bra{\gen{\Lambda}}$ is a dual eigenvector
of the transfer matrix (\ref{trans}) whilst $\ket{0}$ is the $\alg{sl}(2)$ highest weight vector as previously defined. 
\end{mydef}

We can see from Lemma \ref{pion} that the whole dependence of $\mathcal{F}_n$ with a given variable
$\lambda_j$ comes from the operator $B(\lambda_j)$. Thus the characterization of $\gen{\Xi}( \mathbb{C}^n )$
can be performed with the help of the following Proposition.

\begin{proposition}[Polynomial structure] \label{polB}
The operator $B(\lambda_i)$ is of the form
\[
B(\lambda_i) = x_i^{\frac{1-L}{2}} P_B (x_i)
\]
where $x_i = e^{2 \lambda_i}$ and $P_B$ is a polynomial of degree $L-1$. 
\end{proposition}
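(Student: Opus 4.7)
The plan is to expand $B(\lambda_i)$ as an auxiliary-space matrix element of the ordered product (\ref{mono}) and to track the $x_i = e^{2\lambda_i}$ dependence factor by factor.

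First I would rewrite each non-zero entry of $\mathcal{R}(\lambda_i - \mu_j)$ in terms of $x_i$. Using $e^{\lambda_i} = x_i^{1/2}$, one obtains $a(\lambda_i - \mu_j) = \tfrac{1}{2} x_i^{-1/2}(x_i\, e^{\gamma-\mu_j} - e^{\mu_j-\gamma})$ and $b(\lambda_i - \mu_j) = \tfrac{1}{2} x_i^{-1/2}(x_i\, e^{-\mu_j} - e^{\mu_j})$, while $c = \sinh\gamma$ is independent of $x_i$. Hence every $a$ or $b$ factor contributes $x_i^{-1/2}$ times a degree-one polynomial in $x_i$, whereas $c$ is a constant.

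Then I would expand $B = \mathcal{T}_{12}$ as a sum over sequences $s_0, s_1, \dots, s_L$ of intermediate auxiliary states (with $s_0, s_L$ fixed by the $(1,2)$ matrix element). Inspection of (\ref{rmat}) shows that at each site $j$ the local weight equals $c$ when $s_{j-1} \neq s_j$ and equals $a(\lambda_i - \mu_j)$ or $b(\lambda_i - \mu_j)$ when $s_{j-1} = s_j$. Because $s_0 \neq s_L$, the number $k$ of flip sites must be odd, so a path with $k$ flips contributes $x_i^{-(L-k)/2}$ times a polynomial of degree at most $L - k$ in $x_i$. Using
\[
x_i^{-(L-k)/2} = x_i^{(1-L)/2}\, x_i^{(k-1)/2} \quad \text{with } (k-1)/2 \in \mathbb{Z}_{\geq 0},
\]
each such contribution becomes $x_i^{(1-L)/2}$ times a polynomial of degree at most $L - (k+1)/2 \leq L - 1$. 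Summing over all paths then gives $B(\lambda_i) = x_i^{(1-L)/2} P_B(x_i)$ with $\deg P_B \leq L - 1$.

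To upgrade this bound to an equality I would examine the coefficient of $x_i^{L-1}$ in $P_B$. Only $k = 1$ paths can feed this leading term, since paths with $k \geq 3$ produce strictly lower degree. The $k = 1$ path whose flip occurs at site $j_1$ acts on $\mathbb{V}_{\mathcal{Q}}$ as a nonzero scalar (built from $c$ together with the leading coefficients $\tfrac{1}{2}e^{\gamma-\mu_j}$ or $\tfrac{1}{2}e^{-\mu_j}$ of the $L-1$ non-flipping factors) multiplying the spin-flip operator localized at site $j_1$. Since such localized spin-flip operators at distinct sites are linearly independent in $\mathrm{End}(\mathbb{V}_{\mathcal{Q}})$, no cancellation between different $j_1$ is possible, and one concludes $\deg P_B = L - 1$. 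The main obstacle is really just the bookkeeping: at each non-flip site one must track which of $a$ or $b$ appears (depending on the quantum spin there), and at each flip site one must record how the quantum spin is locked to the auxiliary transition. Once these local correlations are set up, both the degree bound and the non-vanishing of the leading coefficient follow routinely from the product structure (\ref{mono}).
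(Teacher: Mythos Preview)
Your argument is correct. The paper itself does not spell out a proof: it simply states that the result follows by induction on the lattice length $L$ and refers to \cite{Galleas_2008} for details. Your approach is genuinely different in presentation. Instead of the recursive step $L \to L+1$ (appending one more $L$-operator and tracking how the auxiliary-space blocks $A,B,C,D$ recombine), you unfold the monodromy product (\ref{mono}) into a sum over auxiliary-space paths and count the $x_i$-powers directly. The two arguments encode the same combinatorics; the inductive version is shorter to state, while your path expansion has the advantage of isolating the leading coefficient explicitly and showing, via linear independence of single-site spin flips, that the degree is exactly $L-1$ and not merely $\leq L-1$.

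One small imprecision worth tightening: the coefficient of $x_i^{L-1}$ coming from a fixed $k=1$ path is not literally a scalar times the spin flip at $j_1$, but a nonzero \emph{diagonal} operator on the remaining $L-1$ sites tensored with the flip at $j_1$ (since at each non-flip site the leading entry depends on the local quantum spin). This does not affect your conclusion, because operators that flip the spin at distinct sites $j_1$ while acting diagonally elsewhere are still linearly independent in $\mathrm{End}(\mathbb{V}_{\mathcal{Q}})$.
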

\begin{proof}
The proof follows from induction and it can be found with details in \cite{Galleas_2008}.
\end{proof}

\begin{mydef} \label{Km}
Let $\mathbb{K} [x_1 , x_2 , \dots , x_n]$ be the polynomial ring in $n$ variables
$x_1, \dots , x_n$ which we shall simply denote as $\mathbb{K} [x]$. Then we define
$\mathbb{K}^m [x] \subset \mathbb{K} [x]$ as the subset of $\mathbb{K} [x]$ formed 
by  polynomials of degree $m$ in each variable $x_i$.
\end{mydef}

Due to the Proposition \ref{polB} we can conclude that $\mathcal{F}_n$ are of the form
\[
\label{Fbar}
\mathcal{F}_n (X^{1,n}) = \prod_{i=1}^{n} x_i^{\frac{1-L}{2}} \; \bar{\mathcal{F}}_n (x_1, x_2 , \dots , x_n) \; ,
\]
where $\bar{\mathcal{F}}_n$ is a multivariate polynomial of degree $L-1$ in each one of its variables.
Taking into account the Definition \ref{Km} we can write $x=(x_1, \dots , x_n) \in \mathbb{C}^n$ and conclude
that $\bar{\mathcal{F}}_n  = \bar{\mathcal{F}}_n (x) \in \mathbb{K}^{L-1} [x]$. The function space 
$\gen{\Xi}$ is then defined as follows.

\begin{mydef}[Space $\gen{\Xi}$] 
The function space $\gen{\Xi}(\mathbb{C}^n)$ consists of the following set of functions,
\<
\label{XI}
\gen{\Xi}(\mathbb{C}^n) = \left\{ \mathcal{F}_n \; : \; \mathfrak{L}(\lambda) \mathcal{F}_n = \Lambda(\lambda) \mathcal{F}_n , \; \mathcal{F}_n = \mathbf{x}^{\frac{1-L}{2}} \; \bar{\mathcal{F}}_n (x), \; \bar{\mathcal{F}}_n  \in \mathbb{K}^{L-1} [x] \right\} \; ,
\>
where $\mathbf{x}^{\frac{1-L}{2}} = \prod_{i=1}^n x_i^{\frac{1-L}{2}}$.
\end{mydef}

\begin{remark} 
We can readily see from Definition $4$ that $\gen{\Xi}(\mathbb{C}^n) \subset \mathbf{x}^{\frac{1-L}{2}} \mathbb{K}^{L-1} [x]$. 
\end{remark}

\section{Partial differential equations}
\label{sec:PDE}

The operator $\mathfrak{L}$ defined in (\ref{Lop}) corresponds to the transfer matrix in the
function space $\gen{\Xi}(\mathbb{C}^n)$. In its turn $\mathfrak{L}$ is given in terms
of operators $D_{z_i}^{z_{\alpha}}$ and here we intend to demonstrate that those operators admit a 
differential realization when their action is restricted to the set $\mathbb{K}^m [x]$.
This realization could not be immediately employed for (\ref{Lop}) as we are interested in solutions 
$\mathcal{F}_n \in \gen{\Xi}(\mathbb{C}^n)$. Nevertheless, in what follows we shall see how this differential
structure can still be incorporated into Eq. (\ref{eigenL}). This approach has been previously employed in \cite{Galleas_proc},
where we have derived a set of partial differential equations (PDEs) satisfied by the partition function of the six-vertex model
with domain wall boundaries. For completeness' sake, in the \Appref{sec:dwbc} we also discuss one of the PDEs explicitly 
obtained in \cite{Galleas_proc}.

\begin{lemma}[Differential realization] \label{diff_real}
Let $\mathbb{K}^m [z] \subset \mathbb{K} [z]$ with $z=(z_1, \dots , z_n) \in \mathbb{C}^n$ be a subset of the polynomial
ring according to the Definition \ref{Km}. The operator $D_{z_i}^{z_{\alpha}}$ in $\mathbb{K}^m [z]$ is then given by
\[
\label{diarep}
D_{z_i}^{z_{\alpha}} = \sum_{k=0}^{m} \frac{(z_{\alpha} - z_i)^k}{k!} \frac{\partial^k}{\partial z_i^k} \; .
\]
\end{lemma}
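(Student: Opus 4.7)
The plan is to recognize Lemma \ref{diff_real} as a restatement of the finite Taylor expansion, and to verify it on a basis of $\mathbb{K}^m[z]$ in the variable $z_i$, extending by linearity in the remaining variables.

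First I would reduce to a one-variable statement. Since $D_{z_i}^{z_\alpha}$ acts only by substituting $z_i \mapsto z_\alpha$, it commutes with multiplication by any polynomial in the variables $z_j$ for $j\neq i$ (which, together with $z_\alpha$, are treated as constants by both sides of \eqref{diarep}). By linearity of both sides in $f$, it therefore suffices to check the identity on the monomial basis of $\mathbb{K}^m[z]$ viewed as a polynomial in $z_i$ alone, namely on $f=z_i^{\,j}$ for $0\leq j\leq m$.

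Next I would evaluate both sides on $f=z_i^{\,j}$. The left-hand side gives $D_{z_i}^{z_\alpha} z_i^{\,j} = z_\alpha^{\,j}$ by definition \eqref{dia}. The right-hand side gives
\[
\sum_{k=0}^{m} \frac{(z_\alpha - z_i)^k}{k!}\,\frac{\partial^k}{\partial z_i^k}\,z_i^{\,j}
=\sum_{k=0}^{j} \binom{j}{k}\,(z_\alpha-z_i)^k\, z_i^{\,j-k},
\]
where the terms with $k>j$ vanish since $\partial_{z_i}^k z_i^{\,j}=0$, and the truncation at $k=m$ is harmless because $j\leq m$. By the binomial theorem this equals $\bigl(z_i+(z_\alpha-z_i)\bigr)^{j}=z_\alpha^{\,j}$, matching the left-hand side.

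Finally I would note that this is exactly the content of the finite Taylor formula: for any polynomial $f$ of degree at most $m$ in $z_i$, expanding $f(\dots,z_\alpha,\dots)$ about the point $z_i$ in the $i$-th slot yields a sum that terminates at order $m$ because $\partial_{z_i}^{k}f=0$ for $k>m$. The only conceptual caveat is that \eqref{diarep} is an identity of operators on $\mathbb{K}^m[z]$ and not on all of $\mathbb{K}[z]$; the truncation of the sum at $k=m$ is what makes the right-hand side a well-defined finite-order differential operator, and this truncation is consistent precisely because the action is restricted to $\mathbb{K}^m[z]$. No step presents a genuine obstacle—the argument is a direct application of the binomial theorem monomial by monomial—so the main care is bookkeeping to make sure the degree bound is used exactly once, to cut off the Taylor series at order $m$.
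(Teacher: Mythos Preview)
Your argument is correct: the identity \eqref{diarep} is precisely the finite Taylor expansion of a polynomial of degree at most $m$ in $z_i$, and your verification on the monomial basis $z_i^{\,j}$ together with linearity in the remaining variables is clean and complete. The paper itself does not give an argument here but defers to \cite{Galleas_proc}; the approach there is the same Taylor-expansion observation you have written out, so there is no substantive difference to report.
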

\begin{proof} 
A detailed proof is given in \cite{Galleas_proc}.
\end{proof}

As we have previously remarked we can not immediately substitute the realization (\ref{diarep}) into (\ref{Lop})
as the functions $\mathcal{F}_n$ belong to the function space $\gen{\Xi}(\mathbb{C}^n)$.
However, as the non-polynomial part of $\gen{\Xi}(\mathbb{C}^n)$ consists of an overall multiplicative factor,
we can still rewrite (\ref{eigenL}) in terms of functions $\bar{\mathcal{F}}_n \in \mathbb{K}^{L-1}[x]$ defined through (\ref{Fbar}).
For that we introduce the variable $z=e^{2 \lambda}$ and define the functions
\<
\label{coffeebar}
\bar{J}_0 = J_0 x_0^{\frac{L}{2}} \; , \qquad   \bar{K}_{z} = K_{\lambda} x_0^{\frac{1}{2}} z^{\frac{L-1}{2}} \quad
\mbox{and} \quad \bar{\Lambda}(x_0) = \Lambda(\lambda_0) x_0^{\frac{L}{2}} \; .  
\>
By doing so we are left with the equation
\[
\label{eigen_bar}
\bar{\mathfrak{L}}(x_0) \bar{\mathcal{F}}_n (\bar{X}^{1,n}) = \bar{\Lambda}(x_0) \bar{\mathcal{F}}_n (\bar{X}^{1,n}) 
\]
where $\bar{X}^{a,b} = \{ x_k \; : \; a \leq k \leq b  \}$. In its turn the operator $\bar{\mathfrak{L}}$ reads
\[
\label{Lop_bar}
\bar{\mathfrak{L}}(x_0) = \bar{J}_0 - \sum_{x \in \bar{X}^{1,n}} \bar{K}_{x} D^{x_0}_{x} 
\]
and now it acts on functions $\bar{\mathcal{F}}_n \in \mathbb{K}^{L-1}[x]$. Hence we can employ the realization
(\ref{diarep}) and this procedure reveals that $\bar{\mathfrak{L}}$ is of the form 
\[
\label{Lop_bard}
\bar{\mathfrak{L}}(x_0) = \sum_{k=0}^{L} x_0^k \; \gen{\Omega}_k \; .
\]
Here $\{ \gen{\Omega}_k \}$ is a set of differential operators and the expression (\ref{Lop_bard}) implies 
that the LHS of (\ref{eigen_bar}) is a polynomial of degree $L$ in the variable $x_0$. 

\begin{remark}
The operator $\bar{\mathfrak{L}}$ corresponds to the transfer matrix $T$ in the function space $\mathbb{K}^{L-1}[x]$.
Thus, since $[T(\lambda_1) , T(\lambda_2)]=0$ as matricial operators, we can conclude that $[ \bar{\mathfrak{L}}(x_1) , \bar{\mathfrak{L}}(x_2) ] = 0$
which implies the condition $[ \gen{\Omega}_i ,  \gen{\Omega}_j ] = 0$.
\end{remark}

The RHS of (\ref{eigen_bar}) is also a polynomial of degree $L$ in the variable $x_0$ and this feature prevents that any operator
$\gen{\Omega}_k$ vanishes identically by construction. This property can be demonstrated with the help of Proposition \ref{propAD}.

\begin{proposition} \label{propAD}
The operators $A(\lambda)$ and $D(\lambda)$ are of the form 
\<
\label{ADP}
A(\lambda) = z^{-\frac{L}{2}} P_A (z) \qquad \mbox{and} \qquad D(\lambda) = z^{-\frac{L}{2}} P_D (z) \; ,
\>
where $P_A$ and $P_D$ are polynomials of degree $L$.
\end{proposition}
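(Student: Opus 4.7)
The strategy is to expose each entry of (\ref{rmat}) as an explicit Laurent polynomial in $z=e^{2\lambda}$ and then unfold the ordered product (\ref{mono}) along trajectories in the auxiliary space. Using $\sinh u=(e^{u}-e^{-u})/2$ one writes
\[
a(\lambda-\mu_j)=\tfrac12 z^{-1/2}\bigl(e^{\gamma-\mu_j} z-e^{\mu_j-\gamma}\bigr), \qquad b(\lambda-\mu_j)=\tfrac12 z^{-1/2}\bigl(e^{-\mu_j} z-e^{\mu_j}\bigr),
\]
whereas $c(\lambda-\mu_j)=\sinh\gamma$ is a pure constant. Thus the $\lambda$-dependent entries $a,b$ are each $z^{-1/2}$ times a polynomial of degree one in $z$, while $c$ contributes no $z$ dependence at all.

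Next I would absorb the permutations in (\ref{mono}) into ``check'' matrices $\check R_{\mathcal{A}j}=P_{\mathcal{A}j}R_{\mathcal{A}j}(\lambda-\mu_j)$ and unfold $A(\lambda)=\mathcal{T}_{11}$ as a sum over auxiliary trajectories $(\sigma_0,\sigma_1,\dots,\sigma_L)$, $\sigma_j\in\{1,2\}$, subject to the boundary conditions $\sigma_0=\sigma_L=1$. A direct multiplication of (\ref{rmat}) by $P$ yields
\[
\check R\ket{11}=a\ket{11},\quad \check R\ket{22}=a\ket{22},\quad \check R\ket{12}=b\ket{12}+c\ket{21},\quad \check R\ket{21}=c\ket{12}+b\ket{21},
\]
so the matrix elements of $\check R$ that preserve the auxiliary state are precisely those carrying a factor $a$ or $b$, whereas the aux-flipping ones are exactly those carrying a factor $c$. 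A trajectory contributing to $A$ must therefore contain an even number $n_c$ of $c$-factors; its contribution is
\[
(\sinh\gamma)^{n_c}\,z^{-(L-n_c)/2}\,Q_{L-n_c}(z)=z^{-L/2}\bigl[(\sinh\gamma)^{n_c}\,z^{n_c/2}\,Q_{L-n_c}(z)\bigr],
\]
with $Q_{L-n_c}$ a polynomial of degree $L-n_c$. Since $n_c$ is even the bracket is itself a polynomial in $z$ of degree $L-n_c/2\le L$, and summing over trajectories gives $A(\lambda)=z^{-L/2}P_A(z)$ with $\deg P_A\le L$; the argument for $D(\lambda)=\mathcal{T}_{22}$ is identical.

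To pin down the degree at $L$ (rather than strictly less), I would specialise to $\ket{0}$: relation (\ref{action}) gives $A(\lambda)\ket{0}=\prod_j a(\lambda-\mu_j)\ket{0}$, produced by the unique trajectory $\sigma_j\equiv 1$, whose leading monomial $z^{L/2}\prod_j\tfrac12 e^{\gamma-\mu_j}\ket{0}$ cannot be cancelled by any other trajectory; similarly $D(\lambda)\ket{0}=\prod_j b(\lambda-\mu_j)\ket{0}$ fixes the top degree of $P_D$. The delicate step, and the one where it is easiest to slip, is the bookkeeping with the permutations $P_{\mathcal{A}j}$: only once they are absorbed into $\check R$ do the constant entries $c$ sit in the aux-flipping sector, so that the parity count on $n_c$ closes and unwanted half-integer powers of $z$ in $P_A$ and $P_D$ are excluded.
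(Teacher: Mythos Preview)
Your argument is correct. The paper itself offers essentially no proof here: it just says the result ``follows from induction'' and defers details to \cite{Galleas_2008}. You instead give a direct combinatorial argument, expanding the monodromy as a sum over auxiliary-space trajectories and tracking powers of $z$ term by term.

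The two routes are closely related --- the induction peels off one $\check R_{\mathcal{A}j}$ at a time, while your expansion unrolls all $L$ factors at once --- but yours is more self-contained and isolates the mechanism cleanly. The key step, which you identify correctly, is that once the permutation is absorbed into $\check R=P R$, the off-diagonal (auxiliary-flipping) blocks carry exactly the $z$-independent weight $c$, so that any trajectory with $\sigma_0=\sigma_L$ contains an even number $n_c$ of such factors and the contribution $z^{-L/2}\cdot z^{n_c/2}Q_{L-n_c}(z)$ is a genuine polynomial in $z$ after extracting $z^{-L/2}$. Without the permutation this parity count fails already at $L=1$: the diagonal block of $R$ itself is $\mathrm{diag}(a,c)$, whose constant entry cannot be written as $z^{-1/2}$ times a polynomial in $z$. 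So your closing remark about the bookkeeping with $P_{\mathcal{A}j}$ being the place where one can slip is exactly right. The sharpness of the degree via the unique surviving trajectory on $\ket{0}$ is also fine, since \eqref{action} gives $A(\lambda)\ket{0}=\prod_j a(\lambda-\mu_j)\ket{0}$ with nonvanishing leading coefficient $2^{-L}\prod_j e^{\gamma-\mu_j}$, and similarly for $D$.
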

\begin{proof}
The proof follows from induction and the details can be found in \cite{Galleas_2008}.
\end{proof}

Now from equations (\ref{trans}) and (\ref{ADP}) we can conclude that $T(\lambda)= z^{-\frac{L}{2}} P_T (z)$
where $P_T$ is a polynomial of degree $L$. Moreover, since our transfer matrix forms a commutative
family we can conclude that its eigenvalues will be of the form $\Lambda(\lambda)= z^{-\frac{L}{2}} P_{\Lambda} (z)$
where $P_{\Lambda}$ is also a polynomial of degree $L$. Hence, the RHS of (\ref{eigen_bar}) 
consists of a polynomial of degree $L$ in the variable $x_0$ and we can write
\[
\bar{\Lambda}(x_0) = \sum_{k=0}^L x_0^k \; \Delta_k \; .
\]
In this way Eq. (\ref{eigen_bar}) must be satisfied independently by each power in $x_0$
and we are left with the following system of differential equations,
\[
\label{eigK}
\gen{\Omega}_k \bar{\mathcal{F}}_n (x_1, \dots , x_n) = \Delta_k \bar{\mathcal{F}}_n (x_1, \dots , x_n) \qquad \quad 0 \leq k \leq L \; .
\]
The system of Eqs. (\ref{eigK}) comprises a total of $L+1$ eigenvalue problems, i.e. an eigenvalue equation
for each operator $\gen{\Omega}_k$, being solved by the same eigenfunction $\bar{\mathcal{F}}_n$. Moreover, the direct
inspection of (\ref{eigK}) for small values of $n$ and $L$ shows that each equation is solely able to determine the eigenfunctions
$\bar{\mathcal{F}}_n$ in addition to its eigenvalue $\Delta_k$. Thus the system of differential equations (\ref{eigK})
can be simultaneously integrated. 

The explicit form of the operators $\gen{\Omega}_k$ can be straightforwardly obtained from (\ref{diarep})
and (\ref{Lop_bar}). Although their form for general values of $n$ and $L$ can be rather cumbersome
we still find compact expressions for some of them. For instance, the operator $\gen{\Omega}_{L}$ is trivial
in consonance with the fact that the leading term coefficient of the transfer matrix corresponds to the Cartan element
of the $U_q[\widehat{\alg{sl}}(2)]$ algebra \cite{Reshet_1987}. 
Fortunately, the situation is more interesting for the operator $\gen{\Omega}_{L-1}$ and we find a compact structure containing only
derivatives $\frac{\partial^{L-1}}{\partial x_i^{L-1}}$. In what follows we present the explicit differential equation
associated with the operator $\gen{\Omega}_{L-1}$ spectral problem.

\paragraph{The operator $\gen{\Omega}_{L-1}$.} Equation (\ref{eigK}) for $k=L-1$ and arbitrary values
of $n$ and $L$ corresponds to the following partial differential equation,
\[
\label{PDE}
\left[ \mathcal{V}^{(n)} + \sum_{i=1}^n  \mathcal{Q}^{(n)}_i \frac{\partial^{L-1}}{\partial x_i^{L-1}} \right] \bar{\mathcal{F}}_n (\bar{X}^{1,n} ) = \Delta_{L-1}  \bar{\mathcal{F}}_n ( \bar{X}^{1,n} ) \; ,
\]
with functions $\mathcal{V}^{(n)} = \mathcal{V}^{(n)}( \bar{X}^{1,n} )$ and $\mathcal{Q}^{(n)}_i = \mathcal{Q}^{(n)}(x_i ; \bar{X}_i^{1,n})$ 
defined over the sets $\bar{X}^{a,b}$ and $\bar{X}_i^{a,b} = \bar{X}^{a,b} \backslash \{ x_i \}$.
By writing $\mathcal{V}^{(n)} = - 2^{-L} \prod_{k=1}^L y_k^{-\frac{1}{2}} [\mathcal{V}_1^{(n)} + \mathcal{V}_2^{(n)}]$ we then have
\[
\label{Q1}
\mathcal{V}_1^{(n)} = (q^n + q^{L-n-2}) \sum_{k=1}^{L} y_k \; ,
\]
and
\<
\label{Q2}
\mathcal{V}_2^{(n)} = \begin{cases}
q^{n-2} (q-1)^2 (q+1) \sum_{k=0}^{L+1-2n} q^k \sum_{i=1}^{n} x_i \qquad \quad \quad \; L \geq 2(n-1) \nonumber \\
-q^{L-n} (q-1)^2 (q+1) \sum_{k=0}^{2n - 3 - L} q^k \sum_{i=1}^{n} x_i \qquad \quad \; L < 2(n-1) 
\end{cases} \; . \nonumber \\
\>
In their turn the functions $\mathcal{Q}^{(n)}_i$ are given by
\[
\label{VI}
\mathcal{Q}^{(n)}_i = \frac{(q-1)^2 (q+1)}{2^L q^{L+n} (L-1)!} \frac{\prod_{k=1}^{L} y_k^{-\frac{1}{2}}}{\prod_{\stackrel{j=1}{j \neq i}}^{n}(x_j - x_i)} \left[ \sum_{m=0}^{L} \mathcal{G}^{(n)}_m (x_i ; \bar{X}_i^{1,n}) \sum_{1 \leq j_1 < \dots < j_m \leq L} \prod_{\alpha=1}^{m} y_{j_{\alpha}} \right] \; , \nonumber \\
\]
where
\[
\label{GD}
\mathcal{G}^{(n)}_{L-d} (x_i ; \bar{X}_i^{1,n}) = x_i^d \sum_{l=0}^{n-1} x_i^l \; \psi_{l,d} \sum_{\stackrel{1 \leq j_1 < \dots < j_{n-1-l} \leq n}{j_{\alpha} \neq i}} \prod_{\alpha=1}^{n-1-l} x_{j_{\alpha}}
\]
and
\<
\label{PSI}
&& \psi_{l,d} = \nonumber \\
&& \begin{cases}
(-1)^{L+d+l} q^{L+2l} \sum_{k=0}^{2d + 2n - 3 - L - 4l} q^k \qquad \qquad \qquad \; \; d > L - (n+1) + 2l \nonumber \\
(-1)^{3l - n -1} q^{L+2l} \sum_{k=0}^{L-5} q^k \qquad \qquad \qquad \qquad \qquad  d = L - (n+1) + 2l , \; L \geq 5 \nonumber \\
(-1)^{3l - n} q^{2L + 2 l -4} \sum_{k=0}^{3-L} q^k \qquad \qquad \qquad \qquad \quad \; \;  d = L - (n+1) + 2l , \; L < 5 \nonumber \\
(-1)^{L+d+l+1} q^{2d+2n-2-2l} \sum_{k=0}^{L-2d-2n+1+4l} q^k \qquad \quad  d < L - (n+1) + 2l  
\end{cases} \; . \nonumber \\
\>
Clearly the terms $\sum_{k=0}^{l} q^k$ in (\ref{Q2}) and (\ref{PSI}) can be simplified with the help
of the geometric sum formula $\sum_{k=0}^{l} q^k = \frac{1-q^{l+1}}{1-q}$. However, we prefer to keep
the summation symbol in order to make more explicit that the summation vanishes for $l<0$. Eq. (\ref{PDE}) is a partial differential
equation of order $L-1$ and in what follows we shall demonstrate how it can be translated into a system of first order
equations. We shall also discuss its solutions and properties for particular values of $n$ and $L$.

\subsection{Reduction of order}
\label{sec:RED}

Linear differential equations of higher order can be conveniently written as a system of 
first order equations. In our case we have (\ref{PDE}), which is a linear partial differential
equation of order $L-1$, and here we intend to embed that equation into a system of first order 
equations. The resulting system of partial differential equations is explicitly given in Lemma \ref{red}.

\begin{lemma} \label{red}
Let $\partial_i \equiv \frac{\partial}{\partial x_i}$ and let $\vec{\psi}$ be a $(L-2)n+1$ dimensional
vector denoted as
\[
\label{PSI}
\vec{\psi} = \left( \begin{matrix}
\psi^{(0)} \\ \psi^{(1)} \\ \vdots \\ \psi^{(L-2)} 
\end{matrix} \right) \; .
\]
Also define $\psi^{(0)} = \psi_0 = \bar{\mathcal{F}}_n (\bar{X}^{1,n} )$ while the remaining entries $\psi^{(k)}$
are $n$-dimensional column vectors with components $\psi_i^{(1)} = \partial_i \psi_0$ and
$\psi_i^{(k)} = \partial_i \psi_i^{(k-1)}$ for $k > 1$. The equation (\ref{PDE}) then reads
\[
\label{pdesys}
(\mathcal{Q}^{(n)} - \Delta_{L-1} ) \psi_0 +  \sum_{i=1}^n  \mathcal{V}^{(n)}_i \partial_i \psi_i^{(L-2)} = 0 \; .
\]
\end{lemma}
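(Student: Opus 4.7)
The plan is essentially to unpack the cascading definition of the vector $\vec{\psi}$ and verify that it trivializes the reduction of (\ref{PDE}) to (\ref{pdesys}). Because (\ref{PDE}) contains only pure partial derivatives of the maximal order $L-1$ in a single variable at a time (no mixed derivatives appear), each such derivative can be absorbed into one chain of auxiliary functions per variable $x_i$, which is precisely the structure encoded in the definition of $\psi_i^{(k)}$.

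First I would establish by induction on $k$ the identification $\psi_i^{(k)} = \partial_i^k \psi_0$ for $0 \le k \le L-2$ and $1 \le i \le n$. The case $k=0$ is the definition $\psi^{(0)} = \psi_0$, and for $k=1$ the prescription $\psi_i^{(1)} = \partial_i \psi_0$ gives the identification directly. For the inductive step, the recursion $\psi_i^{(k)} = \partial_i \psi_i^{(k-1)}$ combined with the hypothesis $\psi_i^{(k-1)} = \partial_i^{k-1} \psi_0$ yields $\psi_i^{(k)} = \partial_i^k \psi_0$ at once.

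Applying $\partial_i$ once more to the $k = L-2$ case gives $\partial_i \psi_i^{(L-2)} = \partial_i^{L-1} \psi_0 = \partial_i^{L-1} \bar{\mathcal{F}}_n(\bar{X}^{1,n})$, which is exactly the top-order derivative appearing in (\ref{PDE}). Substituting this identification into (\ref{PDE}) and transposing the term $\Delta_{L-1} \bar{\mathcal{F}}_n$ to the left-hand side produces (\ref{pdesys}) directly, with the coefficient functions inherited from (\ref{PDE}).

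There is no genuine analytic obstacle here; the argument is pure bookkeeping. The only sanity checks are that the dimensional count $1 + n(L-2) = (L-2)n+1$ matches the stated length of $\vec{\psi}$, and that the first-order relations $\psi_i^{(k)} = \partial_i \psi_i^{(k-1)}$ built into the definition of $\vec{\psi}$ are understood to supplement (\ref{pdesys}) so as to yield a genuine first-order system: together they comprise $(L-2)n+1$ first-order scalar equations for the $(L-2)n+1$ entries of $\vec{\psi}$, with (\ref{pdesys}) playing the role of the one "closure" equation and the defining recursions providing the remaining $(L-2)n$ trivial ones.
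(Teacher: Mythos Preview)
Your proposal is correct and follows exactly the paper's approach, which the paper itself summarizes as ``straightforward substitution of (\ref{PSI}) into (\ref{PDE}).'' You have simply spelled out that substitution in more detail (the induction $\psi_i^{(k)} = \partial_i^{k}\psi_0$ and the dimension count), which is entirely in keeping with the paper's one-line proof.
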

\begin{proof}
Straightforward substitution of (\ref{PSI}) into (\ref{PDE}).
\end{proof}

\paragraph{Matricial form.} The system of equations described in Lemma \ref{red} can be conveniently
written as a matrix equation. More precisely, the aforementioned system consists of the following equations
\<
\label{base}
\partial_i \psi_0 - \psi_i^{(1)} &=& 0 \nonumber \\
\partial_i \psi_i^{(k-1)} - \psi_i^{(k)} &=& 0 \qquad 1 < k \leq L-2 \; ,
\>
in addition to (\ref{pdesys}). Here we intend to rewrite (\ref{pdesys}) and (\ref{base})
as $\gen{\Upsilon} \vec{\psi} = 0$ for a given matrix $\gen{\Upsilon}$. For that we introduce the $n$-dimensional
vectors
\<
\label{omna0}
\vec{\omega}_0 = \left( \mathcal{Q}^{(n)}_1 \partial_1 , \dots , \mathcal{Q}^{(n)}_n \partial_n \right) 
\quad \mbox{and} \quad \vec{\nabla}_0 = \left( \begin{matrix} \partial_1 \\ \vdots \\ \partial_n \end{matrix} \right) \; .
\>
Also let $\mathbf{0}_{r \times s}$ denote the null matrix with dimensions $r \times s$ and define vectors
\<
\label{omna}
\vec{\omega} = \left( \mathbf{0}_{1 \times n(L-3)} , \vec{\omega}_0 \right) 
\quad \mbox{and} \quad \vec{\nabla} = \left( \begin{matrix} \vec{\nabla}_0 \\ \mathbf{0}_{n(L-3) \times 1} \end{matrix} \right) \; .
\>
Next we define $\hat{\mathfrak{D}}$ as a matrix of dimensions $(L-2)\times (L-2)$ with entries
\[
\label{DD}
\hat{\mathfrak{D}}_{ij} = \begin{cases}
- \mathbbm{1}_{n \times n} \qquad \quad i=j \;\;\; \quad ; \; 1 \leq j \leq L-2 \cr
\mathcal{D} \qquad \qquad \quad i=j+1 \; ; \; 1 \leq j < L-2 \cr
\mathbf{0}_{n \times n} \qquad \qquad \mbox{otherwise}  
\end{cases} \; ,
\]
where $\mathbbm{1}_{n \times n}$ denotes the $n \times n$ identify matrix and
$\mathcal{D}$ is also a  $n \times n$ diagonal matrix given by $\mathcal{D}= \mbox{diag}(\partial_1, \partial_2 , \dots , \partial_n)$.
In this way the system of Eqs. formed by (\ref{pdesys}) and (\ref{base}) can be written as
$\gen{\Upsilon} \vec{\psi} = 0$ with matrix $\gen{\Upsilon}$ given by
\[
\label{Ups}
\gen{\Upsilon} = \left( \begin{matrix}
\mathcal{Q}^{(n)} - \Delta_{L-1}  & \vec{\omega} \\
\vec{\nabla} & \hat{\mathfrak{D}} \end{matrix} \right) \; .
\]

As previously remarked in \Secref{sec:PDE}, the above discussed relation between functional equations of type 
(\ref{eigenL}) and partial differential equations (PDEs) was firstly proposed in \cite{Galleas11} for the partition 
function of the six-vertex model with domain wall boundaries. This relation was subsequently made precise in \cite{Galleas_proc}.
Despite their similarity, Eq. (\ref{eigenL}) describes an eigenvalue problem while the equation derived in \cite{Galleas_proc}
can not be regarded in that way. Nevertheless, in both cases the associated PDEs exhibit a similar structure, and one can wonder
if the equation derived in \cite{Galleas_proc} can also be recasted as a system of first order PDEs. 
In \Appref{sec:dwbc} we address this question and show that this is indeed the case for the PDE describing the partition function 
of the six-vertex model with domain wall boundaries.

\subsection{Some particular solutions}
\label{sec:n0n1}

In this section we study the solutions of Eq. (\ref{PDE}) for the cases $n=0,1,2$
and particular values of the lattice length $L$. Interestingly, for the case $L=2$
and $n=2$ some geometric features of our equation emerge through the method of
characteristics \cite{Evans_book}.

\subsubsection{Case $n=0$}
\label{sec:n0}

Although the case $n=0$ is trivial, we present it here for completeness reasons. 
By definition we have that $\bar{\mathcal{F}}_0$ is a constant and we can conclude
that $\Delta_{L-1} = \mathcal{V}^{(0)}$. Thus from (\ref{Q1}) and (\ref{Q2}) we find 
\[
\Delta_{L-1} = - \frac{(1+q^{L-2}) \sum_{k=1}^{L} y_k}{2^L \prod_{k=1}^{L} y_k^{\frac{1}{2}}} \; .
\]

\subsubsection{Case $n=1$}
\label{sec:n1}

Equation (\ref{PDE}) for $n=1$ is actually an ordinary differential equation
reading
\[
\label{N1}
\frac{d^{L-1} \bar{\mathcal{F}}_1}{d x_1^{L-1}} = \left( \frac{\Delta_{L-1} - \mathcal{V}^{(1)}}{\mathcal{Q}^{(1)}_1} \right) \bar{\mathcal{F}}_1 \; .
\]
Moreover, when $L=2$ we can see that (\ref{N1}) is a first order equation and the general solution can be obtained by
direct integration. In that case we obtain
\<
\label{solN1}
\bar{\mathcal{F}}_1 (x_1) &=& \mathcal{C}_1 (q^2 x_1^2 - y_1 y_2)^{\frac{1}{2}} \exp{ \left\{ - \frac{q}{(q^2 -1)^2} \left[ (1+q^2)\frac{(y_1 + y_2)}{(y_1 y_2)^{\frac{1}{2}}} + 4 q \Delta_{L-1} \right] \xi (x_1)\right\} } \; , \nonumber \\
\>
where $\mathcal{C}_1$ is an integration constant and $\xi (x_1) = \arctanh{( q x_1 (y_1 y_2)^{-\frac{1}{2}})}$.
We are interested in solutions $\bar{\mathcal{F}}_1 \in \mathbb{K}^1 [x_1]$ while (\ref{solN1}) consists of a square root multiplied by an exponential. 
At first glance this structure does not resemble the desired class of solutions but we then notice that
\[
\exp{[ \xi (x_1) ]} = - \ii \frac{[q x_1 + (y_1 y_2)^{\frac{1}{2}}]}{(q^2 x_1^2 - y_1 y_2)^{\frac{1}{2}}} \; .
\]
In this way we find that the condition
\[
- \frac{q}{(q^2 -1)^2} \left[ (1+q^2)\frac{(y_1 + y_2)}{(y_1 y_2)^{\frac{1}{2}}} + 4 q \Delta_{L-1} \right] = \pm 1 
\]
leave us with the desired type of solution. Hence, the requirement $\bar{\mathcal{F}}_1 \in \mathbb{K}^1 [x_1]$
yields a constraint for the eigenvalues $\Delta_{L-1}$ which is solved by
\[
\Delta_{L-1} = - \frac{(1+q^2)}{4 q} \frac{(y_1 + y_2)}{(y_1 y_2)^{\frac{1}{2}}} \mp \frac{(q^2 - 1)^2}{4 q^2} \; .
\]

\subsubsection{Case $n=2$}
\label{sec:char}
Here we shall address the case $n=2$ and $L=2$ where (\ref{PDE}) reads
\[
\label{pde2}
\mathcal{Q}^{(2)}_1 \frac{\partial \bar{\mathcal{F}}_2 }{\partial x_1} + \mathcal{Q}^{(2)}_2 \frac{\partial \bar{\mathcal{F}}_2 }{\partial x_2} = (\Delta_{L-1} - \mathcal{V}^{(2)}) \bar{\mathcal{F}}_2  \; .
\]
For this particular case it is worth remarking that $\mathcal{V}^{(2)}$ does not depend on the variables $x_1$ and $x_2$.
Now let $\mathcal{S} = \{ (x_1 , x_2 , \bar{\mathcal{F}}_2 ) \}$ be the surface generated by the solution of (\ref{pde2})
and let $\mathcal{C}$ be a curve lying on $\mathcal{S}$. Also, let $s$ be a variable parameterizing the curve $\mathcal{C}$
such that the vector
\[
\label{vf}
\left( \mathcal{Q}^{(2)}_1 (x_1(s) , x_2(s)) , \mathcal{Q}^{(2)}_2 (x_1(s) , x_2(s)) , (\Delta_{L-1} - \mathcal{V}^{(2)}) \bar{\mathcal{F}}_2 (x_1(s) , x_2(s)) \right)
\]
is tangent to $\mathcal{C}$ at each point of the curve. Then the curve 
\[
\mathcal{C} = \{ ( x_1 (s), x_2 (s) , \bar{\mathcal{F}}_2 (x_1(s) , x_2(s)) ) \}
\]
satisfy the following system of ordinary differential equations,
\<
\label{ODE}
\frac{d x_1}{ds} &=& \mathcal{Q}^{(2)}_1 (x_1(s) , x_2(s)) \nonumber \\
\frac{d x_2}{ds} &=& \mathcal{Q}^{(2)}_2 (x_1(s) , x_2(s)) \nonumber \\
\frac{d \bar{\mathcal{F}}_2}{ds} &=& (\Delta_{L-1} - \mathcal{V}^{(2)}) \bar{\mathcal{F}}_2 (x_1(s) , x_2(s))  \; .
\>
The curve $\mathcal{C}$ is called characteristic curve for the vector field (\ref{vf}) and it
is determined by the solution of the system (\ref{ODE}). The characteristic equations (\ref{ODE}) can also
be written without fixing a particular parameterization variable as
\[
\label{charod}
\frac{d x_1}{\mathcal{Q}^{(2)}_1} = \frac{d x_2}{\mathcal{Q}^{(2)}_2} = \frac{d \bar{\mathcal{F}}_2}{ (\Delta_{L-1} - \mathcal{V}^{(2)}) \bar{\mathcal{F}}_2} \; .
\]
Now we can form any two equations by combining the terms of (\ref{charod}) and for convenience we choose
\<
\label{charod1}
\frac{d x_1}{d x_2} &=& \frac{\mathcal{Q}^{(2)}_1}{\mathcal{Q}^{(2)}_2} = - \frac{x_1}{x_2} \nonumber \\
\frac{d \bar{\mathcal{F}}_2}{d x_2} &=&  (\Delta_{L-1} - \mathcal{V}^{(2)}) \frac{\bar{\mathcal{F}}_2}{\mathcal{V}^{(2)}_2} \; .
\>
The integration of (\ref{charod1}) yields the solution
\[
\label{solcur}
\bar{\mathcal{F}}_2 (x_1 , x_2) = \kappa(x_1 x_2) \exp{ \left\{  \frac{[(1+q^4)(y_1 + y_2) + 4q^2 \sqrt{y_1 y_2} \Delta_{L-1}]}{(q^2 -1)^2 (y_1 + y_2)} \log{\zeta(x_1 , x_2)} \right\} } \; ,
\]
where $\kappa(x_1 x_2)$ is an arbitrary function of the product $x_1 x_2$ and 
\[
\zeta(x_1 , x_2) = q^2 (y_1 + y_2) (x_1 + x_2) - (1+q^2)( y_1 y_2 + q^2 x_1 x_2 ) \; .
\]
Hence, in order to having $\bar{\mathcal{F}}_2 \in \mathbb{K}^1 [x_1 , x_2]$ we choose $\kappa$
as a constant function and impose the condition
\[
\label{cond}
\frac{[(1+q^4)(y_1 + y_2) + 4q^2 \sqrt{y_1 y_2} \Delta_{L-1}]}{(q^2 -1)^2 (y_1 + y_2)} = 1 \; .
\]
The resolution of (\ref{cond}) for $\Delta_{L-1}$ yields the eigenvalue
\[
\Delta_{L-1} = - \frac{(y_1 + y_2)}{2 \sqrt{y_1 y_2}} \; .
\]

\section{Concluding remarks}
\label{sec:conclusion}

In this work we have presented a mechanism allowing to associate a linear partial differential
equation with the eigenvalue problem of the six-vertex model transfer matrix.
This mechanism has its roots in the algebraic-functional approach described in details in
\cite{Galleas_proc}, and a crucial step in this program is the identification of the
function space $\gen{\Xi} (\mathbb{C}^n)$ defined in \Secref{sec:Fn}. 

In \Secref{sec:OPD} we identify the action of the transfer matrix on the function space 
$\gen{\Xi} (\mathbb{C}^n)$ and we find that it is given in terms of certain operators $D_{z_i}^{z_{\alpha}}$
exhibiting a simple action even on larger spaces such as $\mathbb{C}[z]$ with $z = (z_1 , \dots , z_n) \in \mathbb{C}^n$.
Also, the operators $D_{z_i}^{z_{\alpha}}$ play a fundamental role in establishing the aforementioned
partial differential equations as they possess a differential realization in the desired function
space.

The six-vertex model transfer matrix admits the series expansion $T(\lambda) = T(0) (\mathbbm{1} + \mathcal{H} \lambda + \dots )$
where $T(0)$ is the discrete translation operator, i.e. momentum operator exponentiated, and $\mathcal{H}$ is the hamiltonian of the
$XXZ$ spin chain \cite{Korepin_book}. Although $\mathcal{H}$ contains only next-neighbors interactions, the higher order
terms contain highly non-local terms. On the other hand, our transfer matrix in the appropriate variable
consists essentially of a polynomial whose degree scales linearly with the lattice length. Thus the number of independent
commuting quantities having $T(\lambda)$ as its former are finite and, in particular, we find the operator 
$\gen{\Omega}_{L-1}$ which exhibits a local structure in terms of differentials.

The partial differential equation corresponding to the operator $\gen{\Omega}_{L-1}$ spectral problem
is explicitly given in \Secref{sec:PDE} and it consists of a linear equation of order $L-1$. Although 
we have an equation whose order scales with the lattice length $L$, in \Secref{sec:RED} we also
show how this equation can be translated into a system of first order partial differential equations
by standard methods. In \Secref{sec:n0n1} we analyze the solutions of our equation for particular values of the lattice
length. Interestingly, we find that the spectrum of eigenvalues is fixed by the condition that the eigenfunctions
belong to $\gen{\Xi} (\mathbb{C}^n)$.

\section{Acknowledgements}
\label{sec:ack}
This work is supported by the Netherlands Organization for Scientific Research (NWO) under the VICI grant 680-47-602
and by the ERC Advanced grant research programme No. 246974, {\it ``Supersymmetry: a window to non-perturbative physics"}.
The work of W.G. is also supported by the German Science Foundation (DFG) under the Collaborative Research 
Center (SFB) 676 Particles, Strings and the Early Universe.

\appendix

\section{Domain wall boundaries}
\label{sec:dwbc}
The possibility of extracting partial differential equations from the algebraic-functional
method described in \Secref{sec:AF} has been first put forward in \cite{Galleas11} and
subsequently refined in \cite{Galleas_proc}. Those works have considered the six-vertex model
with domain wall boundary conditions and a compact partial differential equation describing
the model partition function has been derived in \cite{Galleas_proc}. This equation 
reads
\[
\label{om}
\left[ \sum_{i=1}^{L} \bar{a}(x_i , y_i) - \frac{1}{(L-1)!} \sum_{i=1}^{L} \prod_{j=1}^{L} \bar{a}(x_i , y_j) \prod_{\stackrel{j=1}{j \neq i}}^{L} \frac{\bar{a}(x_j , x_i)}{\bar{b}(x_j , x_i)} \frac{\partial^{L-1}}{\partial x_i^{L-1}} \right] \bar{Z} (\bar{X}^{1,L}) = 0 \; ,
\]
where $\bar{Z}$ is essentially the partition function of the model. In (\ref{om}) we are considering
the conventions $\bar{a}(x,y) = x q - y q^{-1}$ and $\bar{b}(x,y) = x  - y$ which is slightly different from the ones 
used in \cite{Galleas_proc}. Also, the structure of (\ref{om}) is closely related to the structure of (\ref{PDE}) and to make this feature
more apparent we rewrite (\ref{om}) as 
\[
\label{PDEdw}
\left[ \mathcal{V}^{DW} + \sum_{i=1}^L  \mathcal{Q}^{DW}_i \frac{\partial^{L-1}}{\partial x_i^{L-1}} \right] \bar{Z} (\bar{X}^{1,L} ) = 0 \; ,
\]
where 
\<
\label{VQ}
\mathcal{V}^{DW} &=& \sum_{i=1}^{L} \bar{a}(x_i , y_i) \nonumber \\
\mathcal{Q}^{DW}_i &=& - \frac{1}{(L-1)!} \prod_{j=1}^{L} \bar{a}(x_i , y_j) \prod_{\stackrel{j=1}{j \neq i}}^{L} \frac{\bar{a}(x_j , x_i)}{\bar{b}(x_j , x_i)} \; .
\>

Here we intend to translate (\ref{PDEdw}) into a system of first order partial differential equations
in the same lines of \Secref{sec:RED}. Since there are no significant modifications compared to
the reduction of order used for Eq. (\ref{PDE}), we restrict ourselves to presenting only the final result. 

\begin{lemma}
Let $\vec{\phi}$ be the following $L(L-2)+1$ dimensional vector
\[
\label{PHI}
\vec{\phi} = \left( \begin{matrix}
\phi^{(0)} \\ \phi^{(1)} \\ \vdots \\ \phi^{(L-2)} 
\end{matrix} \right) \; ,
\]
where $\phi^{(0)} = \phi_0 = \bar{Z} (\bar{X}^{1,L} )$ and the remaining entries are
$L$-dimensional column vectors with components $\phi_i^{(1)} = \partial_i \phi_0$ and
$\phi_i^{(k)} = \partial_i \phi_i^{(k-1)}$ for $k > 1$. Then Eq. (\ref{PDEdw}) is equivalent
to $\gen{\Upsilon}_{DW} \vec{\phi} = 0$ with $\gen{\Upsilon}_{DW}$ being obtained from
$\gen{\Upsilon}$ given in (\ref{Ups}) under the mappings $\Delta_{L-1} \mapsto 0$, 
$\mathcal{V}^{(n)} \mapsto \mathcal{V}^{DW}$, $\mathcal{Q}_i^{(n)} \mapsto \mathcal{Q}_i^{DW}$
and $n \mapsto L$.
\end{lemma}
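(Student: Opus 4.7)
The plan is to follow, almost verbatim, the reduction-of-order procedure already carried out in Section~\ref{sec:RED} for Eq.~(\ref{PDE}), exploiting the fact that (\ref{PDEdw}) has precisely the same structural form as (\ref{PDE}) after the substitutions $\Delta_{L-1}\mapsto 0$, $\mathcal{V}^{(n)}\mapsto \mathcal{V}^{DW}$, $\mathcal{Q}_i^{(n)}\mapsto \mathcal{Q}_i^{DW}$ and $n\mapsto L$. In particular (\ref{PDEdw}) is homogeneous (there is no eigenvalue term, whence $\Delta_{L-1}\mapsto 0$) and involves only the univariate differential $\partial_i^{L-1}$, which is exactly the situation to which the recipe of Lemma~\ref{red} is tailored.

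First I would introduce the scalar $\phi_0 = \bar{Z}(\bar{X}^{1,L})$ together with the vector-valued auxiliary functions $\phi^{(k)}$, $1\le k\le L-2$, each of dimension $L$, defined recursively by $\phi_i^{(1)} = \partial_i\phi_0$ and $\phi_i^{(k)} = \partial_i\phi_i^{(k-1)}$ for $k>1$. Stacking $\phi_0$ on top of $\phi^{(1)},\dots,\phi^{(L-2)}$ yields the $L(L-2)+1$ dimensional column $\vec\phi$ of (\ref{PHI}). The defining recursions automatically generate $L(L-2)$ first-order equations
\[
\partial_i\phi_0 - \phi_i^{(1)} = 0,\qquad \partial_i\phi_i^{(k-1)} - \phi_i^{(k)} = 0 \quad (1<k\le L-2),
\]
in direct parallel with (\ref{base}). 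Next I would rewrite (\ref{PDEdw}) using the telescoping identity $\partial_i^{L-1}\phi_0 = \partial_i\phi_i^{(L-2)}$, which reduces the $(L-1)$-th order equation to the single first-order relation
\[
\mathcal{V}^{DW}\phi_0 + \sum_{i=1}^L \mathcal{Q}_i^{DW}\,\partial_i\phi_i^{(L-2)} = 0,
\]
in complete analogy with (\ref{pdesys}) after the mapping $\Delta_{L-1}\mapsto 0$.

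Finally I would assemble the $L(L-2)+1$ equations into the matrix form $\gen{\Upsilon}_{DW}\vec\phi = 0$ by organising them along the block decomposition of (\ref{Ups}). The $(1,1)$ entry becomes $\mathcal{V}^{DW}$; the top row carries $\vec\omega = (\mathbf{0}_{1\times L(L-3)},\vec\omega_0)$ with $\vec\omega_0 = (\mathcal{Q}_1^{DW}\partial_1,\dots,\mathcal{Q}_L^{DW}\partial_L)$; the first column carries $\vec\nabla$ as in (\ref{omna}) but with $L$-dimensional blocks replacing the $n$-dimensional ones; and the bottom-right $(L-2)\times(L-2)$ block is $\hat{\mathfrak{D}}$ built via the recipe (\ref{DD}) with the $n\times n$ identity and derivative blocks upgraded to $L\times L$ blocks, $\mathcal{D}=\mathrm{diag}(\partial_1,\dots,\partial_L)$. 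Since the reduction is purely mechanical and identical to the one performed for Eq.~(\ref{PDE}) modulo renaming of indices and the removal of the eigenvalue term, I do not expect a real obstacle. The only point demanding care is the dimensional bookkeeping, namely verifying that $1 + L(L-2)$ is indeed the correct total count and that the substitution $n\mapsto L$ is applied consistently throughout every block of $\gen{\Upsilon}$.
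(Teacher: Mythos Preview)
Your proposal is correct and follows exactly the paper's own approach, which is simply ``Direct comparison of (\ref{PDE}), (\ref{PDEdw}) and (\ref{Ups}).'' You have merely spelled out the mechanical steps of that comparison in more detail than the paper does.
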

\begin{proof}
Direct comparison of (\ref{PDE}), (\ref{PDEdw}) and (\ref{Ups}).
\end{proof}

\bibliographystyle{hunsrt}
\bibliography{references}

\end{document}